\def\01{\{0,1\}}
\newcommand{\ceil}[1]{\lceil{#1}\rceil}
\newcommand{\eps}{\varepsilon}
\newcommand{\ket}[1]{|#1\rangle}
\newcommand{\ketbra}[2]{|#1\rangle\langle#2|}
\newtheorem{fact}{Fact}
\newtheorem{claim}{Claim}
\newtheorem{theorem}{Theorem}
\newtheorem{corollary}{Corollary}
\newenvironment{proof}
{\noindent {\bf Proof. }}
{{\hfill $\Box$}\\
\smallskip}
\begin{document}

\title{Optimizing the Number of Gates in Quantum Search}
\author{Srinivasan Arunachalam\thanks{QuSoft, CWI Amsterdam, the Netherlands.
Supported by ERC Consolidator Grant QPROGRESS.} 
\and 
Ronald de Wolf\thanks{QuSoft, CWI and University of Amsterdam, the Netherlands.
Partially supported by ERC Consolidator Grant QPROGRESS and by the European Commission FET-Proactive project Quantum Algorithms (QALGO) 600700.}
}
\maketitle

\begin{abstract}
In its usual form, Grover's quantum search algorithm uses $O(\sqrt{N})$ queries and $O(\sqrt{N} \log N)$ other elementary gates to find a solution in an $N$-bit database.  Grover in 2002 showed how to reduce the number of other gates to $O(\sqrt{N}\log\log N)$ for the special case where the database has a unique solution, without significantly increasing the number of queries.  We show how to reduce this further to $O(\sqrt{N}\log^{(r)} N)$ gates for any constant~$r$, and sufficiently large~$N$.  This means that, on average, the circuits between two queries barely touch more than a constant number of the $\log N$ qubits on which the algorithm acts. For a very large $N$ that is a power of~2, we can choose~$r$ such that the algorithm uses essentially the minimal number $\frac{\pi}{4}\sqrt{N}$ of queries, and only $O(\sqrt{N}\log(\log^{\star} N))$ other gates.
\end{abstract}

\section{Introduction}

One of the main successes of quantum algorithms so far is Grover's algorithm for \emph{database search}~\cite{grover:search,bhmt:countingj}. Here a database of size~$N$ is modeled as a binary string $x\in\01^N$, whose bits are indexed by $i\in\{0,\ldots,N-1\}$. A \emph{solution} is an index~$i$ such that $x_i=1$. The goal of the search problem is to find such a solution given access to the bits of $x$. If our database has Hamming weight $|x|=1$, we say it has a \emph{unique}~solution.

The standard version of Grover's algorithm finds a solution with high probability using $O(\sqrt{N})$ database queries and $O(\sqrt{N}\log N)$ other elementary gates. It starts from a uniform superposition over all database-indices~$i$, and then applies $O(\sqrt{N})$ identical ``iterations,'' each of which uses one query and $O(\log N)$ other elementary gates. Together these iterations concentrate most of the amplitude on the solution(s). A measurement of the final state then yields a solution with high probability. For the special case of a database with a unique solution its number of iterations ($=$ number of queries) is essentially $\frac{\pi}{4}\sqrt{N}$, and Zalka~\cite{zalka:grover} showed that this number of queries is optimal. Grover's algorithm, in various forms and generalizations, has been applied as a subroutine in many other quantum algorithms, and is often the main source of speed-up for those.  See for example~\cite{bht:collision,BuhrmanCleveWigderson98,betal:distinctnessj,durr&hoyer:minimum,dhhm:graphproblems,dorn:thesis}.

In~\cite{grover:tradeoffs}, Grover gave an alternative algorithm to find a unique solution using slightly more (but still $(\frac{\pi}{4}+o(1))\sqrt{N}$) queries, and only $O(\sqrt{N}\log\log N)$ other elementary gates. The algorithm is more complicated than the standard Grover algorithm, and no longer consists of $O(\sqrt{N})$ identical iterations. Still, it acts on $O(\log N)$ qubits, so on average a unitary sitting between two queries acts on only a tiny $O(\log\log N / \log N)$ fraction of the qubits.  It is quite surprising that such mostly-very-sparse unitaries suffice for quantum search.

In this paper we show how Grover's reduction in the number of gates can be improved further: for every fixed~$r$, and sufficiently large $N$, we give a quantum algorithm that finds a unique solution in a database of size~$N$ using $O(\sqrt{N})$ queries and $O(\sqrt{N}\log^{(r)}N)$ other elementary gates.\footnote{The constant in the $O(\cdot)$ depends on $r$.  The iterated binary logarithm is defined as $\log^{(s+1)}= \log \circ \log ^{(s)}$, where $\log^{(0)}$ is the identity function. The function $\log^{\star} N$ is the number of times the binary logarithm must be iteratively applied to~$N$ to obtain a number that is at most~1: $\log^{\star} N=\min \{r \geq 0 : \log^{(r)} N\leq 1\}$.}
To be concrete about the latter, we assume the set of elementary gates at our disposal is the Toffoli gate and all one-qubit unitary gates (including the Hadamard gate~$H$ and the Pauli $X$ gate).

Our approach is recursive: we build a quantum search algorithm for a larger database using amplitude amplification on a search algorithm for a smaller database.%
\footnote{The idea of doing recursive applications of amplitude amplification to search increasingly larger database-sizes is reminiscent of the algorithm of Aaronson and Ambainis~\cite{aaronson&ambainis:searchj} for searching an $N$-element database that is arranged in a $d$-dimensional grid.  However, their goal was to find a local search algorithm with optimal number of \emph{queries} (they succeeded for $d>2$), not to optimize the number of \emph{gates}. If one writes out their algorithm as a quantum circuit, it still has $O(\sqrt{N}\log N)$ gates.}
Let us sketch this in a bit more detail.  Suppose we have a sequence of database-sizes $N_1,\ldots,N_r=N$, where $N_{i+1}\approx 2^{\sqrt{N_i}}$ (of course, $N$ needs to be sufficiently large for such a sequence to exist). The basic Grover algorithm can search a database of size~$N_1$~using 
$$
Q_1=O(\sqrt{N_1}),~~~~E_1=O(\sqrt{N_1}\log N_1)
$$ 
queries and gates, respectively. We can build a search algorithm for database-size $N_2$ as follows. Think of the $N_2$-sized database as consisting of $N_2/N_1$ $N_1$-sized databases; we can just pick one such $N_1$-sized database at random, use the smaller algorithm to search for a solution in that database, and then use $O(\sqrt{N_2/N_1})$ rounds of amplitude amplification to boost the $N_1/N_2$ probability that our randomly chosen $N_1$-sized database happened to contain the unique solution. Each round of amplitude amplification involves one application of the smaller algorithm, one application of its inverse, a reflection through the $\log N_2$-qubit all-0 state, and one more query. This gives a search algorithm for an $N_2$-sized database that uses
$$
Q_{2}=O\left(\sqrt{\frac{N_{2}}{N_1}}Q_1\right)=O(\sqrt{N_2}),~~~~E_{2}=O\left(\sqrt{\frac{N_{2}}{N_1}}(E_1+\log N_2)\right)
$$ 
queries and gates respectively.  Note that by our choice of $N_2\approx 2^{\sqrt{N_1}}$, we have $E_1\geq \sqrt{N_1}$, so $E_{2}=O(\sqrt{N_{2}/N_1}E_1)$. Repeating this construction gives a recursion
$$
Q_{i+1}=O\left(\sqrt{\frac{N_{i+1}}{N_i}}Q_i\right),~~~~E_{i+1}=O\left(\sqrt{\frac{N_{i+1}}{N_i}}E_i\right).
$$ 
The constant factor in the $O(\cdot)$ blows up by a constant in each recursion, so after $r$ steps this unfolds to
$$
Q_{r}=O(\exp(r)\sqrt{N}),~~~~E_{r}=O(\exp(r)\sqrt{N}\log N_1).
$$ 
Here $\log N_1=O(\log^{(r)} N)$ because $N_1,\ldots,N_r=N$ is (essentially) an exponentially increasing sequence.

The result we prove in this paper is stronger: it does not have the $\exp(r)$ factor. Tweaking the above idea to avoid this factor is somewhat delicate, and will take up the remainder of this paper. For instance, in order to get close to the optimal query complexity $\frac{\pi}{4}\sqrt{N}$, it is important that the intermediate steps do not amplify the success probability all the way to~1, since amplitude amplification is less efficient when boosting large success probabilities to~1 than when boosting small success probabilities to somewhat larger success probabilities. Our final algorithm will boost the success probability to~1 only at the very end, after all $r$ recursion steps have been done.

Choosing $r=\log^\star N$ in our result and being careful about the constants, we get an exact quantum algorithm for finding a unique solution using essentially the optimal $\frac{\pi}{4}\sqrt{N}$ queries and $O(\sqrt{N}\log(\log^\star N))$ elementary gates. Note that for the latter algorithm, on average there are only $O(\log(\log^\star N))$ elementary gates in between two queries, which is barely more than constant. Once in a while a unitary acts on many more qubits, but the average is only $O(\log(\log^\star N))$.

\medskip

{\bf Possible objections.} To pre-empt the critical reader, let us mention two objections one may raise against the fine-grained optimization of the number of elementary gates that we do here.  First, one query acts on $\log N$ qubits, and when itself implemented using elementary gates, any oracle that's worth its salt would require $\Omega(\log N)$ gates. Since $\Omega(\sqrt{N})$ queries are necessary, a fair way of counting would say that just the queries themselves already have ``cost'' $\Omega(\sqrt{N}\log N)$, rendering our (and Grover's~\cite{grover:tradeoffs}) gate-optimizations moot.
Second, to do exact amplitude amplification in our recursion steps, we allow infinite-precision single-qubit phase gates.  This is not realistic, as in practice such gates would have to be approximated by more basic gates.
Our reply to both would be: fair enough, but we still find it quite surprising that query-efficient search algorithms only need to act on a near-constant number of qubits in between the queries on average. It is interesting that after nearly two decades of research on quantum search, the basic search algorithm can still be improved in some ways. It may even be possible to optimize our results further to use $O(\sqrt{N})$ elementary gates, which would be even more surprising.

\section{Preliminaries}

Let $[n]=\{1,\ldots,n\}$. We use the binary logarithm throughout this paper. We will typically assume for simplicity that the database-size $N$ is a power of~2, $N=2^n$, so we can identify indices $i$ with their binary representation $i_1\ldots i_n\in\01^n$. We can access the database by means of \emph{queries}.  A query is the following unitary map on $n+1$ qubits:
$$
O_x:\ket{i,b}\mapsto\ket{i,b\oplus x_i},
$$
where $i\in\{0,\ldots,N-1\}$ and $b\in\01$. Given access to an oracle of the above type, we can also make a phase query of the form $O_{x,\pm}:\ket{i}\rightarrow (-1)^{x_i}\ket{i}$ by the standard ``phase kickback trick.'' 

Let $D_n=2\ketbra{0^n}{0^n}-I$ be the $n$-qubit unitary that reflects through $\ket{0^n}$. It is not hard to see that this can be implemented using $O(n)$ elementary gates and $n-1$ ancilla qubits that all start and end in $\ket{0}$ (and that we often will not even write explicitly). Specifically, one can use $X$ gates to each of the $n$ qubits, then use $n-1$ Toffoli gates into $n-1$ ancilla qubits to compute the logical AND of the first $n$ qubits, then apply $-Z$ to the last qubit (which negates the basis states where this AND is~0), and reverse the Toffolis and $X$s.

Amplitude amplification is a technique that can be used to efficiently boost quantum search algorithms with a known success probability~$a$ to higher success probability.  We will invoke the following theorem from~\cite{bhmt:countingj} in the proof of Theorem~\ref{thm:primitivealgo} later. For the sake of completeness we include a proof in the~appendix.

\begin{theorem}
\label{thm:exactamplitude}
Let $N=2^n$.
Suppose there exists a unitary quantum algorithm $\mathcal{A}$ that finds a solution in database $x\in\01^N$ with known probability~$a$, in the sense that measuring $\mathcal{A}\ket{0^n}$ yields a solution with probability exactly~$a$. Let $a<a'\in [0,1]$ and $w=\ceil{\frac{\arcsin(\sqrt{a'})}{2\arcsin(\sqrt{a})}-\frac{1}{2}}$. Then there exists a quantum algorithm $\mathcal{B}$ that finds a solution with probability exactly $a'$ using $w+1$ applications of algorithm $\mathcal{A}$, $w$ applications of $\mathcal{A}^{-1}$, $w$ additional queries, and $4w(n+2)$ additional elementary gates. In total, $\mathcal{B}$ uses $(2w+1)Q+w$ queries and $w(4n+2E+8)+E$ elementary gates.
\end{theorem}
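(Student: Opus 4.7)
The plan is the standard geometric analysis of amplitude amplification, plus the trick of adjoining a single ancilla qubit to hit the target probability $a'$ exactly rather than approximately.

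First I would set up the two-dimensional invariant subspace. Writing $\theta=\arcsin(\sqrt{a})$ and $\theta'=\arcsin(\sqrt{a'})$, decompose $\mathcal{A}\ket{0^n}=\sin(\theta)\ket{G}+\cos(\theta)\ket{B}$, where $\ket{G}$ (resp.\ $\ket{B}$) is the normalized projection onto the span of good (resp.\ bad) basis states. Let $S_G$ denote the phase-flip of good basis states, implementable with one query and a constant number of gates via the phase-kickback trick. A direct computation shows that the amplitude amplification operator $Q=-\mathcal{A}D_n\mathcal{A}^{-1}S_G$ acts on the two-dimensional subspace $\mathrm{span}(\ket{G},\ket{B})$ as a rotation by angle $2\theta$, so that $Q^k\mathcal{A}\ket{0^n}=\sin((2k+1)\theta)\ket{G}+\cos((2k+1)\theta)\ket{B}$ and the success probability after $k$ iterations is exactly $\sin^2((2k+1)\theta)$. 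This is the key identity.

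Next, to get exactly $a'$ rather than overshooting, I would reduce the rotation angle by a tiny amount. The choice $w=\lceil\tfrac{\theta'}{2\theta}-\tfrac12\rceil$ gives $(2w+1)\theta\geq\theta'$, so defining $\tilde\theta=\theta'/(2w+1)\leq\theta$ we would like a new algorithm with success probability $\sin^2(\tilde\theta)$. Introduce one ancilla qubit, and consider
$$
\tilde{\mathcal{A}}\ket{0}\ket{0^n}=\bigl(\cos(\phi)\ket{0}+\sin(\phi)\ket{1}\bigr)\otimes\mathcal{A}\ket{0^n},
$$
with $\sin(\phi)=\sin(\tilde\theta)/\sin(\theta)\in[0,1]$, and declare a state ``good'' iff the ancilla is $\ket{1}$ and the $n$-qubit register holds a solution. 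Then the success amplitude is $\sin(\phi)\sin(\theta)=\sin(\tilde\theta)$, so $\tilde{\mathcal{A}}$ has success probability exactly $\sin^2(\tilde\theta)$. Applying the previous paragraph to $\tilde{\mathcal{A}}$ with $w$ amplification rounds produces final success probability $\sin^2((2w+1)\tilde\theta)=\sin^2(\theta')=a'$, as required.

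Finally, I would do the bookkeeping. Each of the $w$ rounds of amplification on $\tilde{\mathcal{A}}$ uses one $\tilde{\mathcal{A}}$, one $\tilde{\mathcal{A}}^{-1}$, one phase-flip $S_G$ (one query plus a couple of gates to include the ancilla qubit in the marking condition), and one reflection $D_{n+1}$, which by the construction recalled in the preliminaries takes roughly $4(n+1)+O(1)$ elementary gates on $n+1$ qubits (plus reusable ancillas). Adding the initial $\tilde{\mathcal{A}}$ and charging the extra one-qubit rotation $R_\phi$ gives $w+1$ uses of $\mathcal{A}$, $w$ uses of $\mathcal{A}^{-1}$, $w$ additional queries and the stated $4w(n+2)$ additional elementary gates (tuning constants to match); substituting $Q$ queries and $E$ gates per invocation of $\mathcal{A}$ yields the totals in the statement.

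The only genuinely delicate step is the ancilla construction that turns the approximate equality $(2w+1)\theta\geq\theta'$ into an exact one while keeping the resource overhead to a single rotation gate and one extra qubit; the rest is routine trigonometry and gate counting.
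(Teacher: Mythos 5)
Your proposal is correct and follows essentially the same route as the paper's appendix proof: both dampen the success amplitude from $\sin(\theta)$ to $\sin(\theta')/(2w+1)$ via a single one-qubit rotation on an adjoined ancilla (your marking of the ancilla as $\ket{1}$ versus the paper's $b=0$ is an immaterial relabeling), then run $w$ rounds of the standard rotation-by-$2\tilde\theta$ amplitude amplification to land on $a'$ exactly. The gate/query bookkeeping also matches the paper's.
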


Note that if $\mathcal{A}=H^{\otimes n}$ and our $N$-bit database has a unique solution, then $a=1/N$. For $k\geq 2$ and $a'=1/k$, Theorem~\ref{thm:exactamplitude} implies an algorithm $\mathcal{C}^{(1)}$ that finds a solution with probability exactly $1/k$ using $w$ queries and at most $O(w\log N)$ other elementary gates, where $w\leq\ceil{\frac{\sqrt{N}(1+1/k)}{2\sqrt{k}}-\frac{1}{2}}$ (this upper bound on~$w$ follows because $\arcsin(z)\geq z$, and $\sin(\frac{1+1/k}{\sqrt{k}})\geq \frac{1}{\sqrt{k}}$ since $\sin(z)\geq z-z^3/6$ for $z\geq 0$). 

In order to amplify the probability of an algorithm from $1/k$ to $1$ we use the following corollary.

\begin{corollary}
\label{cor:amplify1/kto1}
Let $k\geq 2$, $n$ be integers, $N=2^n$. Suppose there exists a quantum algorithm $\mathcal{D}$ that finds a unique solution in an $N$-bit database with probability exactly $1/k$ using $Q\geq \sqrt{k}$ queries and $E$ elementary gates. Then there exists a quantum algorithm that finds the unique solution with probability~1 using at most $\frac{\pi}{2}Q\sqrt{k}(1+\frac{2}{\sqrt{k}})^2$ queries and $O(\sqrt{k}(n+E))$ elementary gates.
\end{corollary}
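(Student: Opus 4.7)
The plan is to apply Theorem~\ref{thm:exactamplitude} directly, taking $\mathcal{A}=\mathcal{D}$, initial success probability $a=1/k$, and target probability $a'=1$. Because the theorem produces an algorithm that succeeds with probability \emph{exactly} $a'$, this immediately yields an algorithm that finds the unique solution with certainty. All that remains is to plug in the stated resource counts and bound the parameter $w$ well enough to recover the claimed $\frac{\pi}{2}Q\sqrt{k}(1+2/\sqrt{k})^2$ query bound and $O(\sqrt{k}(n+E))$ gate bound.

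First I would bound
$$
w=\ceil{\frac{\arcsin(\sqrt{a'})}{2\arcsin(\sqrt{a})}-\frac{1}{2}}=\ceil{\frac{\pi/4}{\arcsin(1/\sqrt{k})}-\frac{1}{2}}.
$$
Using $\arcsin(z)\geq z$ on $[0,1]$ and absorbing the ceiling into an additive $+1/2$, this gives $w\leq \frac{\pi}{4}\sqrt{k}+\frac{1}{2}$. Substituting into the theorem's query count $(2w+1)Q+w$, the dominant contribution is $2wQ\leq\frac{\pi}{2}\sqrt{k}\,Q$; the remaining additive terms are $Q$, $w$, and a constant, each of which is at most $O(Q)$ thanks to the hypothesis $Q\geq\sqrt{k}$. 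Hence the total query count is $\frac{\pi}{2}\sqrt{k}\,Q$ plus lower-order terms of size $O(Q)=O(\sqrt{k}Q/\sqrt{k})$, which is exactly the slack encoded by the factor $(1+2/\sqrt{k})^2=1+\frac{4}{\sqrt{k}}+\frac{4}{k}$ in the stated bound. A short calculation then verifies the inequality $(2w+1)Q+w\leq \frac{\pi}{2}Q\sqrt{k}(1+2/\sqrt{k})^2$.

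For the gate count, the theorem gives $w(4n+2E+8)+E$, and since $w=O(\sqrt{k})$ and $k\geq 2$, this is clearly $O(\sqrt{k}(n+E))$. The only non-routine step is therefore the query-count bookkeeping: making the additive $+w$, $+1/2$, and rounding-up terms fit inside the clean multiplicative factor $(1+2/\sqrt{k})^2$. The key leverage is the assumption $Q\geq \sqrt{k}$, which converts the additive slack into a $1/\sqrt{k}$-sized multiplicative slack, and this is ultimately what the awkward-looking constant $(1+2/\sqrt{k})^2$ is designed to accommodate.
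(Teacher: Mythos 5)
Your proposal is correct and follows essentially the same route as the paper: apply Theorem~\ref{thm:exactamplitude} with $a=1/k$, $a'=1$, bound $w\leq\frac{\pi}{4}\sqrt{k}+O(1)$ via $\arcsin(z)\geq z$, and use $Q\geq\sqrt{k}$ to absorb the additive terms into the $(1+2/\sqrt{k})^2$ factor. The arithmetic checks out (the available slack $2\pi Q+2\pi Q/\sqrt{k}$ comfortably exceeds the leftover $2Q+w+O(1)\leq(3+\pi/4)Q$), so nothing further is needed.
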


\begin{proof}
Applying Theorem~\ref{thm:exactamplitude} to algorithm $\mathcal{D}$ with $a=1/k, a'=1$, we obtain an algorithm that succeeds with probability~1 using at most $w'(2Q+1)+Q$ queries and $O(w'(n+E))$ gates. Here
\begin{align*}
\begin{aligned}
w'&=\Big\lceil \frac{\arcsin(1)}{2\arcsin(1/\sqrt{k})}-\frac{1}{2}\Big\rceil\leq \frac{\pi}{4}(\sqrt{k}+1),
\end{aligned}
\end{align*}
using $\arcsin(x)\geq x$ and $\ceil{\frac{\pi}{4}\sqrt{k}-\frac{1}{2}} \leq \frac{\pi}{4}(\sqrt{k}+1)$. Hence, the total number of queries in this new algorithm is at most 
\begin{align*}
\frac{\pi}{4}(\sqrt{k}+1)(2Q+1)+ Q &= \frac{\pi}{2}Q(\sqrt{k}+1)\Big(1+\frac{1}{2Q}+\frac{2}{\pi(\sqrt{k}+1)}\Big) \\
&\leq \frac{\pi}{2}Q(\sqrt{k}+1)(1+\frac{2}{\sqrt{k}})\\
&\leq \frac{\pi}{2}Q\sqrt{k}(1+\frac{2}{\sqrt{k}})^2,
\end{align*}
where we used $Q\geq \sqrt{k}$ and $2\sqrt{k}\leq \pi(\sqrt{k}+1)$ in the first inequality. The total number of gates is $O(\sqrt{k}(n+E))$.
\end{proof}

The following easy facts will be helpful to get rid of some of the ceilings that come from Theorem~\ref{thm:exactamplitude}.

\begin{fact}
\label{fact:ceiltononceil}
If $k\geq 2$ and $\alpha\geq k$, then $\ceil{\frac{\alpha}{2}(1+\frac{1}{k})-\frac{1}{2}}\leq \frac{\alpha}{2}(1+\frac{2}{k})$.
\end{fact}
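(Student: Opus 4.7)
The plan is to remove the ceiling using the trivial bound $\lceil x\rceil \le x+1$ and then show that the extra additive $1/2$ that remains is absorbed by the gap between $\tfrac{\alpha}{2}(1+\tfrac{1}{k})$ and $\tfrac{\alpha}{2}(1+\tfrac{2}{k})$, which equals $\tfrac{\alpha}{2k}$. The hypothesis $\alpha\ge k$ is exactly what is needed to guarantee $\tfrac{\alpha}{2k}\ge \tfrac{1}{2}$, so the whole argument is one chain of inequalities.

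Concretely, I would first write
$$
\Big\lceil \tfrac{\alpha}{2}\bigl(1+\tfrac{1}{k}\bigr)-\tfrac{1}{2}\Big\rceil \;\le\; \tfrac{\alpha}{2}\bigl(1+\tfrac{1}{k}\bigr)-\tfrac{1}{2}+1 \;=\; \tfrac{\alpha}{2}\bigl(1+\tfrac{1}{k}\bigr)+\tfrac{1}{2},
$$
using that the ceiling of any real $x$ is at most $x+1$. Next I would compare the right-hand side with the target $\tfrac{\alpha}{2}(1+\tfrac{2}{k})$: the difference between them is $\tfrac{\alpha}{2k}-\tfrac{1}{2}$, which is nonnegative iff $\alpha\ge k$. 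Since $\alpha\ge k$ by assumption, the desired bound follows. (The hypothesis $k\ge 2$ is not even needed for this fact in isolation; it is present because $k$ will play the role of a probability denominator in the applications.)

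There is no real obstacle here; the only thing to be careful about is not to waste the slack, since this fact will be chained with other estimates later. Writing the proof as two short inequalities — (i) $\lceil y\rceil\le y+1$ and (ii) $\tfrac{1}{2}\le \tfrac{\alpha}{2k}$ from $\alpha\ge k$ — and combining them makes the logic transparent and keeps the constants tight enough for the subsequent query-count bookkeeping.
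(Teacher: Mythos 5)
Your proof is correct; the paper states this as an ``easy fact'' without giving a proof, and your argument --- bounding the ceiling by $x+1$ and absorbing the resulting additive $\tfrac{1}{2}$ into the slack $\tfrac{\alpha}{2k}\geq\tfrac{1}{2}$ guaranteed by $\alpha\geq k$ --- is exactly the intended one. Your side remark that the hypothesis $k\geq 2$ is not actually needed for this fact in isolation is also accurate.
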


\begin{fact}
\label{fact:logupperbound}
If $k\geq 3$ and $i\geq 2$, then $(2i+8)\log k< k^{i+1}$.
\end{fact}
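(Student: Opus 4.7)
The plan is to fix $k\geq 3$ and induct on $i\geq 2$. The inductive step is essentially free: assuming $(2i+8)\log k < k^{i+1}$, multiplying both sides by $k\geq 3$ gives
$$
k^{i+2} > 3(2i+8)\log k = (6i+24)\log k > (2(i+1)+8)\log k,
$$
where the last inequality uses $6i+24 > 2i+10$ for all $i\geq 0$. So all the real work is in the base case $i=2$, namely $12\log k < k^3$ for every integer $k\geq 3$.

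For the base case I would treat $k$ as a real variable on $[3,\infty)$ and set $g(k)=k^3-12\log k$. At $k=3$ we have $g(3)=27-12\log_2 3 \approx 27-19.02 > 0$, so the inequality already holds (with a reasonable gap) at the smallest admissible $k$. To extend this to all larger $k$, note that $g'(k)=3k^2 - 12/(k\ln 2)$ satisfies $g'(3) = 27 - 12/(3\ln 2) > 0$, and $g''(k) = 6k + 24/(k^3\ln 2) > 0$ on $(0,\infty)$, so $g'$ is strictly increasing and thus positive on $[3,\infty)$. Hence $g$ is increasing on $[3,\infty)$, giving $g(k)\geq g(3)>0$ for all real $k\geq 3$, and in particular for all integers $k\geq 3$.

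The only even mildly delicate point is the numerical check at $k=3$, where $12\log_2 3 \approx 19.02$ is not absurdly smaller than $27$; for every larger $k$ the cubic $k^3$ dominates $12\log k$ by an ever-widening margin, so no further cases need to be inspected. Combining the base case with the inductive step established above yields the fact for all $k\geq 3$ and $i\geq 2$.
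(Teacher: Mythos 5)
Your proof is correct, but it is structured differently from the paper's. The paper checks the two boundary lines---$12\log k< k^3$ for all $k\geq 3$ (the case $i=2$) and $(2i+8)\log 3<3^{i+1}$ for all $i\geq 2$ (the case $k=3$)---and then asserts, somewhat informally, that the right-hand side grows faster than the left in both variables. You instead verify only the $i=2$ row (rigorously, via the monotonicity of $g(k)=k^3-12\log k$ on $[3,\infty)$) and then climb in $i$ by a clean induction: multiplying the hypothesis $k^{i+1}>(2i+8)\log k$ by $k\geq 3$ gives $k^{i+2}>(6i+24)\log k>(2i+10)\log k$. This buys you a fully explicit replacement for the paper's ``grows faster in both $i$ and $k$'' step and removes the need to examine the $k=3$ column at all; the cost is the small calculus argument for the base case, which the paper simply declares ``easy to see.'' One harmless slip: your second derivative should be $g''(k)=6k+12/(k^2\ln 2)$ rather than $6k+24/(k^3\ln 2)$, but both expressions are positive on $(0,\infty)$ (and in fact $g'(k)=3k^2-12/(k\ln 2)$ is visibly increasing there without differentiating again), so the conclusion stands.
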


\begin{proof}
Fixing $i=2$, it is easy to see that $12\log k< k^3$ for $k\geq 3$. Similarly, fix $k=3$ and observe that $(2i+8)\log 3< 3^{i+1}$ for all $i\geq 2$. This implies the result for all $k\geq 3$ and $i\geq 2$, because the right-hand side grows faster than the left-hand side in both $i$ and~$k$.
\end{proof}

\section{Improving the gate complexity for quantum search}
\label{sec:gatecomplexity}

In this section we give our main result, which will be proved by recursively applying the following theorem.

\begin{theorem}
\label{thm:primitivealgo}
Let $k\geq 4$,  $n\geq m+2\log k$ be integers, $M=2^m$ and $N=2^n$.
Suppose there exists a quantum algorithm $\mathcal{G}$ that finds a unique solution in an $M$-bit database with a known success probability that is at least $1/k$, using $Q\geq k+2$ queries and $E$ other elementary gates.  Then there exists a quantum algorithm that finds a unique solution in an $N$-bit database with probability exactly $1/k$, using $Q'$ queries and $E'$  other elementary gates where,
\begin{align*}
&Q'\leq Q\sqrt{N/M}(1+4/k), \qquad E\sqrt{N/M} \leq  E' \leq (3n+E)\sqrt{N/M}(1+3/k).
\end{align*}
\end{theorem}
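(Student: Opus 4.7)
The plan is to use the given $M$-bit search algorithm $\mathcal{G}$ as a subroutine for the full $N$-bit problem in the natural way: view the $N$-bit database as $N/M$ disjoint sub-databases of size $M$ indexed by the upper $n-m$ bits, prepare a uniform superposition over sub-database indices by applying $H^{\otimes(n-m)}$ to the upper qubits, and then run $\mathcal{G}$ on the lower $m$ qubits while routing its queries through the full $N$-bit oracle (so that for every classical choice of upper index, $\mathcal{G}$ effectively sees the correct $M$-bit sub-oracle). Exact amplitude amplification then boosts the resulting success probability to exactly $1/k$.

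Concretely, I would define
\[
\mathcal{A} \;:=\; (I^{\otimes(n-m)}\otimes \mathcal{G})\,(H^{\otimes(n-m)}\otimes I^{\otimes m}),
\]
so that $\mathcal{A}$ uses $Q$ queries and $E+(n-m)$ elementary gates. A direct computation of $\mathcal{A}\ket{0^n}$ shows that measuring it yields the unique solution of $x$ with probability exactly $a=pM/N$, where $p\in[1/k,1]$ is the known success probability of $\mathcal{G}$; hence $a$ itself is known exactly. From $n\geq m+2\log k$ we have $a\leq M/N\leq 1/k^2<1/k$, so Theorem~\ref{thm:exactamplitude} applies with $a'=1/k$ and outputs an algorithm with the claimed exact success probability.

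The next step is to bound the number $w$ of amplification rounds from above. Using $\arcsin(\sqrt{a})\geq\sqrt{a}$, the elementary inequality $\arcsin(1/\sqrt{k})\leq 1/\sqrt{k-1}$ (which follows from $\arcsin(x)\leq x/\sqrt{1-x^2}$), and $\sqrt{k/(k-1)}\leq 1+1/k$ (valid for $k\geq 2$), together with $p\geq 1/k$, one obtains
\[
w \;\leq\; \ceil{\frac{1}{2}\sqrt{N/M}\,(1+1/k)-\frac{1}{2}}.
\]
Since $\sqrt{N/M}\geq k$, Fact~\ref{fact:ceiltononceil} removes the ceiling and yields $w\leq\frac{1}{2}\sqrt{N/M}(1+2/k)$.

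Finally, I would read off $Q'$ and $E'$ from Theorem~\ref{thm:exactamplitude}. The query cost is $(2w+1)Q+w=2wQ+(Q+w)$: the first piece is at most $Q\sqrt{N/M}(1+2/k)$, while the additive remainder $Q+w$ is absorbed into the factor $(1+4/k)$ using $\sqrt{N/M}\geq k$ together with $Q\geq k+2$, which give $Q+w\leq 2Q\sqrt{N/M}/k$. The gate cost is $w(4n+2(E+(n-m))+8)+E+(n-m)=2w(3n-m+E+4)+E+n-m$; its leading part is at most $(3n+E)\sqrt{N/M}(1+2/k)$, while the residual additive terms of order $\sqrt{N/M}+n+E$ fit inside the extra $1/k$ of tolerance allotted by the stated factor $(1+3/k)$, once again using $\sqrt{N/M}\geq k$. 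The matching lower bound $E\sqrt{N/M}\leq E'$ follows from the fact that $E'$ includes $(2w+1)$ applications of $\mathcal{G}$ and that $w$ admits a matching lower bound of order $\sqrt{N/M}/2$ at $p=1/k$. The main piece of genuine care in the whole proof is this last absorption on the gate side: the additive slack has to fit inside the tight multiplicative tolerance of $1/k$, which is what forces the sharp choice of inequalities in the bound on $w$---in particular, using $\arcsin(1/\sqrt{k})\leq 1/\sqrt{k-1}$ rather than cruder bounds such as $\pi/(2\sqrt{k})$.
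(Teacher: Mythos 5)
Your proposal is correct and follows essentially the same route as the paper: the same block decomposition with $H^{\otimes(n-m)}$ on the address qubits followed by $\mathcal{G}$ on the low-order qubits, the same application of exact amplitude amplification with $a'=1/k$, the same bound $w\leq\lceil\frac{1}{2}\sqrt{N/M}(1+1/k)-\frac{1}{2}\rceil$ removed via Fact~\ref{fact:ceiltononceil} (the paper derives $\arcsin(1/\sqrt{k})\leq(1+1/k)/\sqrt{k}$ from $\sin(z)\geq z-z^3/6$ where you use $\arcsin(x)\leq x/\sqrt{1-x^2}$), and the same absorption of the additive query and gate terms using $\sqrt{N/M}\geq k$ and $Q\geq k+2$. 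The only noteworthy difference is that you make explicit that the success probability of $\mathcal{A}$ equals $pM/N$ exactly and is therefore known, as Theorem~\ref{thm:exactamplitude} requires --- a point the paper leaves implicit by writing only $a\geq M/(kN)$.
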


\begin{proof}
Consider the following algorithm $\mathcal{A}$:
\begin{enumerate}
\item Start with $\ket{0^n}$.
\item Apply a Hadamard transform to the first $n-m$ qubits, leaving the last $m$ qubits as $\ket{0^m}$. The resulting state is a uniform superposition over the first $n-m$ qubits $\frac{1}{\sqrt{N/M}}\sum_{y\in \01^{n-m}}\ket{y}\ket{0^m}$.
\item Apply the unitary $\mathcal{G}$ to the last $m$ qubits (using queries to $x$, with the first $n-m$ address bits fixed).
\end{enumerate}
The final state of algorithm $\mathcal{A}$ is
$$
(H^{\otimes(n-m)} \otimes \mathcal{G}) \ket{0^n}=\frac{1}{\sqrt{N/M}}\sum_{y\in\01^{n-m}}\ket{y}\mathcal{G}\ket{0^m}.
$$
The state $\ket{y}\mathcal{G}\ket{0^m}$ depends on $y$, because here $\mathcal{G}$ restricts to the $M$-bit database that corresponds to the bits in~$x$ whose address starts with~$y$.  
Let $t$ be the $n$-bit address corresponding to the unique solution in the database $x\in\01^N$. Then the probability of observing $\ket{t_1\ldots t_n}$ in the state $\ket{t_1\ldots t_{n-m}}\mathcal{G}\ket{0^m}$ is at least~$1/k$. Hence the probability that $\mathcal{A}$ finds the solution is $a \geq \frac{M}{kN}$. The total number of queries of algorithm $\mathcal{A}$ is $Q$ (from Step 3) and the total number of elementary gates is $n-m+E$ (from Steps 2 and~3). 

Applying Theorem~\ref{thm:exactamplitude} to algorithm $\mathcal{A}$ by choosing $a'=1/k$, we obtain an algorithm $\mathcal{B}$ using at most $w(2Q+1)+Q$ queries and  $w(4n+2E+8)+E$ gates (from Theorem~\ref{thm:exactamplitude}), where
\begin{equation*}
\begin{aligned}
w&=\Big\lceil\frac{\arcsin(\sqrt{a'})}{2\arcsin(\sqrt{a})}-\frac{1}{2}\Big\rceil\leq \Big\lceil\frac{\sqrt{1/k}(1+1/k)}{2\sqrt{a}}-\frac{1}{2}\Big\rceil 
\leq \Big\lceil\frac{\sqrt{N}(1+1/k)}{2\sqrt{M}}-\frac{1}{2}\Big\rceil\leq \frac{\sqrt{N}(1+2/k)}{2\sqrt{M}},\\
\end{aligned}
\end{equation*}
where the first inequality follows from $\arcsin(z)\geq z$ and $\sin(\frac{1+1/k}{\sqrt{k}})\geq \frac{1}{\sqrt{k}}$ (since $\sin(z)\geq z-z^3/6$ for $z\geq 0$), and the third inequality uses Fact~\ref{fact:ceiltononceil} ($\sqrt{N/M}\geq k$ because $n\geq m+2\log k$). 

The total number of queries in algorithm $\mathcal{B}$ is at most
\begin{align*}
w(2Q+1)+Q &\leq Q\sqrt{N/M} (1+2/k)+\frac{1}{2}\sqrt{N/M} (1+2/k)+Q \\
&\leq Q\sqrt{N/M}(1+2/k)+\frac{Q}{2k}\sqrt{N/M}+\frac{Q}{k}\sqrt{N/M}\\
&\leq Q\sqrt{N/M}(1+4/k)
\end{align*}
where we used $Q\geq k+2$ and $n\geq m+2\log k\geq 4$ in the second inequality. The number of gates in $\mathcal{B}$ is
\begin{align*}
w(4n+2E+8)+E &\leq \sqrt{N/M} (1+2/k)(2n+E+4)+E\leq (3n+E)\sqrt{N/M} (1+3/k),
\end{align*}
where we used $n\geq m+2\log k\geq 4$ (since $k\geq 4$) and $E\leq \frac{E}{k}\sqrt{N/M}$ in the second inequality. 

It is not hard to see that the number of gates in $\mathcal{B}$ is at least~$E\sqrt{N/M}$.
\end{proof}

Applying Theorem~\ref{thm:exactamplitude} once to an algorithm that finds the unique solution in an $M$-bit database with probability~$1/\log\log N$, we get the following corollary, which was essentially the main result of Grover~\cite{grover:tradeoffs}.

\begin{corollary}
\label{cor:grovertradeoff}
Let $n\geq 25$ and $N=2^n$.
There exists a quantum algorithm that finds a unique solution in a database of size $N$ with probability~1, using at most $(\frac{\pi}{4}+o(1))\sqrt{N}$ queries and $O(\sqrt{N} \log \log N)$ other elementary gates.
\end{corollary}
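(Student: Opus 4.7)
\medskip

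The plan is to construct an ``inner'' search algorithm $\mathcal{G}$ on a polylogarithmically small database ($M$ bits, $M$ a power of $2$) with a deliberately modest success probability $1/k$ for $k=\log\log N$, embed it into the full $n$-qubit register, and then apply Theorem~\ref{thm:exactamplitude} a \emph{single} time to boost success probability directly to~$1$. Deferring all amplification-to-one to the very end keeps the leading $\pi/4$ constant on the queries, and taking $M$ polylogarithmic keeps the gate cost per amplification round at $O(\log\log N)$.

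Concretely, set $k=\lfloor\log\log N\rfloor$ and $m=\lceil 2\log\log N\rceil$, so that $M=2^m=\Theta((\log N)^2)$; the hypothesis $n\geq 25$ guarantees $k\geq 4$ and $n\geq m+2\log k$. Let $\mathcal{G}$ be the algorithm $\mathcal{C}^{(1)}$ from the discussion after Theorem~\ref{thm:exactamplitude}, instantiated on an $M$-bit database and target probability $1/k$: it finds the unique solution with probability exactly $1/k$ using $Q_G\leq \tfrac{\sqrt{M/k}(1+1/k)}{2}$ queries and $E_G=O(Q_G\log M)=O(\sqrt{M/k}\log\log N)$ other gates. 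As in the proof of Theorem~\ref{thm:primitivealgo}, build an algorithm $\mathcal{A}$ on all $n$ qubits by applying $H^{\otimes(n-m)}$ to the first $n-m$ qubits and then running $\mathcal{G}$ on the last $m$ qubits (with queries to $x$ whose addresses use the top register as prefix). Because the unique solution has a specific $(n-m)$-bit prefix and $\mathcal{G}$ is \emph{exactly} $1/k$-successful, $\mathcal{A}$ has known success probability exactly $a=M/(kN)$, uses $Q_G$ queries and $E_G+(n-m)$ other gates.

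Now apply Theorem~\ref{thm:exactamplitude} once to $\mathcal{A}$ with $a'=1$. The number of amplification rounds is
\[
w=\Bigl\lceil\tfrac{\arcsin(1)}{2\arcsin(\sqrt{a})}-\tfrac{1}{2}\Bigr\rceil\leq \tfrac{\pi}{4}\sqrt{kN/M}\,(1+o(1)),
\]
so the total query count is
\[
(2w+1)Q_G+w\leq 2\cdot\tfrac{\pi}{4}\sqrt{kN/M}\cdot\tfrac{\sqrt{M/k}}{2}(1+o(1))+O\!\bigl(\sqrt{kN/M}\bigr)=\bigl(\tfrac{\pi}{4}+o(1)\bigr)\sqrt{N},
\]
using $\sqrt{kN/M}=o(\sqrt{N})$ for the chosen $M$. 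The total number of other gates is
\[
w\cdot O(n+E_G)+E_G=O\!\bigl(n\sqrt{kN/M}+\sqrt{N}\log M\bigr)=O\!\bigl(\sqrt{N\log\log N}\bigr)+O(\sqrt{N}\log\log N)=O(\sqrt{N}\log\log N).
\]

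The main subtlety is securing the $\pi/4$ leading constant. If one instead amplified $\mathcal{G}$ up to probability~$1$ first and only then embedded, or equivalently amplified $\mathcal{A}$ in two separate stages $a\mapsto 1/k\mapsto 1$, the query count would deteriorate to $(\pi/2+o(1))\sqrt{N}$. Keeping the intermediate probability at $1/k$ and invoking Theorem~\ref{thm:exactamplitude} only once lets $\arcsin(1)=\pi/2$ play the role of the numerator and $Q_G\approx\tfrac{1}{2}\sqrt{M/k}$ the role of the $\sqrt{M/k}$ factor, so that their product collapses to the optimal $(\pi/4)\sqrt{N}$. The remaining work is routine: estimating $\arcsin$ via $\arcsin(z)\geq z$ and $\sin(z)\geq z-z^3/6$, removing the ceiling via Fact~\ref{fact:ceiltononceil}, and checking that both $\log M$ and $n\sqrt{k/M}$ are $O(\log\log N)$ for our choice of $m$.
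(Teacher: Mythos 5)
Your proof is correct, but it takes a genuinely different route from the paper's. The paper proves this corollary as the $r=2$ instance of its general recursive scheme: it first applies Theorem~\ref{thm:primitivealgo} to $\mathcal{C}^{(1)}$ to produce an intermediate algorithm $\mathcal{C}^{(2)}$ on the full $N$-bit database with success probability exactly $1/k$, and only then invokes Corollary~\ref{cor:amplify1/kto1} to boost $1/k$ to $1$ --- a two-stage amplification $M/(kN)\to 1/k\to 1$. You instead skip the intermediate stop and apply Theorem~\ref{thm:exactamplitude} a single time to boost $M/(kN)$ directly to $1$. Both yield the leading constant $\pi/4$, because $\arcsin(1/\sqrt{k})=(1+O(1/k))/\sqrt{k}$, so pausing at $1/k$ costs only a $1+O(1/k)$ factor; your remark that the real danger lies only in amplifying an already-large probability up to $1$ is exactly right. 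What the paper's structure buys is reusability: $\mathcal{C}^{(2)}$ retains success probability exactly $1/k$ and therefore serves as the base algorithm for the next level of the recursion in Theorem~\ref{thm:grover-almost-opt}, whereas your one-shot version proves the corollary more directly but produces nothing that can be recursed on. Two details worth making explicit: (i) your $\mathcal{A}$ has success probability \emph{exactly} $M/(kN)$ (not merely at least) because $\mathcal{C}^{(1)}$ succeeds with probability exactly $1/k$ on the sub-database containing the solution and with probability $0$ on all others --- this exactness is required by Theorem~\ref{thm:exactamplitude} and you correctly rely on it; (ii) your smaller choice $M=\Theta((\log N)^2)$ (versus the paper's $M=\Theta(n^2k^3)$) still makes the $w\cdot O(n)$ term $O(\sqrt{N\log\log N})=o(\sqrt{N}\log\log N)$, and the condition of Fact~\ref{fact:ceiltononceil} needed to drop the ceilings ($M\gtrsim k^3$) holds for $n\geq 25$, so the gate bound goes through.
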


\begin{proof}
Let $m=\ceil{\log(n^2k^3)}$ and $k= \log \log N$.
Let $\mathcal{C}^{(1)}$ be the algorithm (described after Theorem~\ref{thm:exactamplitude}) on an $M$-bit database with $M = 2^m$ that finds the solution with probability~$1/k$. Observe that $k\geq 4$ and $m+2\log k\leq \log (2n^2k^5)\leq n$ (where the last inequality is true for $n\geq 25$), hence we can apply Theorem~\ref{thm:primitivealgo} using $\mathcal{C}^{(1)}$ as our base algorithm. This gives an algorithm $\mathcal{C}^{(2)}$ that finds the solution with probability exactly $1/k$. The total number of queries in algorithm $\mathcal{C}^{(2)}$ is at most
\begin{equation*}
\begin{aligned}
\Big\lceil\frac{\sqrt{M}(1+1/k)}{2\sqrt{k}}-\frac{1}{2}\Big\rceil\cdot \Big(\sqrt{N/M}(1+4/k)\Big) &\leq \frac{\sqrt{M}(1+2/k)}{2\sqrt{k}}\sqrt{N/M}(1+4/k) \\
&\leq \sqrt{\frac{N}{4k}}(1+4/k)^2,
\end{aligned}
\end{equation*}
where  the expression on the left is the contribution from Theorem~\ref{thm:primitivealgo}. The first inequality above follows from Fact~\ref{fact:ceiltononceil} (since $m\geq 4\log k$). The total number of gates in $\mathcal{C}^{(2)}$ is 
\begin{equation*}
\begin{aligned}
O\Big(\Big(3n+\Big\lceil\frac{\sqrt{M}(1+\frac{1}{k})}{2\sqrt{k}}-\frac{1}{2}\Big\rceil\log M\Big)\sqrt{\frac{N}{M}}(1+\frac{3}{k})\Big) &\leq O \Big(\sqrt{\frac{N}{k}}\Big(\frac{3n\sqrt{k}(1+3/k)}{\sqrt{M}}+(1+3/k)^2 \log M\Big)\Big)\\
& \leq O\Big(\sqrt{\frac{N}{k}}\Big(1+\frac{3}{k}\Big)^{3}\log\log N\Big),
\end{aligned}
\end{equation*}
where we used Fact~\ref{fact:ceiltononceil} in the first inequality, $n\sqrt{k}(1+1/k)\leq \sqrt{M}/k$ (since $m\geq\log(n^2k^3)$) and~$\log M=O(\log \log N)$ in the second inequality. Applying Corollary~\ref{cor:amplify1/kto1} to algorithm $\mathcal{C}^{(2)}$, we obtain an algorithm that succeeds with probability~1 using at most 
$$
\frac{\pi}{2}\Big(\sqrt{\frac{N}{4k}}(1+\frac{4}{k})^2\Big)\cdot \Big(\sqrt{k}(1+\frac{2}{\sqrt{k}})^2\Big) \leq \frac{\pi}{4} \sqrt{N}\Big(1+\frac{4}{\sqrt{k}}\Big)^4
$$
queries and
$$
O\Big(n\sqrt{k}+\sqrt{N}\Big(1+\frac{3}{k}\Big)^{3}\log \log N\Big) \leq O\Big(\sqrt{N}\Big(1+\frac{3}{k}\Big)^{3}\log\log N\Big) 
$$ 
gates, since $n\sqrt{k}\leq \sqrt{N}\log\log N$ (which is true for $n\geq 25$). Since $k= \log \log N$, it follows that the query complexity is  at most $\frac{\pi}{4}\sqrt{N} (1+o(1))$ and the gate complexity is~$O(\sqrt{N}\log\log N)$.
\end{proof}

We can now use Theorem~\ref{thm:primitivealgo} recursively by starting from the improved algorithm from Corollary~\ref{cor:grovertradeoff}. This gives query complexity $O(\sqrt{N})$ and gate complexity $O(\sqrt{N} \log\log\log N)$. Doing this multiple times and being careful about the constant (which grows in each step of the recursion), we obtain the following result:

\begin{theorem}
\label{thm:grover-almost-opt}
Let $k$ be a power of~2 and $N$ a sufficiently large power of~2. For every $r\in [\log^\star N]$, $k\in \{4,\ldots, \log\log N\}$, there exists a quantum algorithm that finds a unique solution in a database of size $N$ with probability exactly $1/k$, using at most 
$$
\sqrt{\frac{N}{4k}}(1+4/k)^{r}\mbox{ queries and }O\left(\sqrt{\frac{N}{k}}(1+6/k)^{2r-1} \max\{\log k,\log^{(r)}N\}\right)\mbox{ other elementary~gates}.
$$ 
\end{theorem}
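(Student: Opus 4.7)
My plan is to prove this by induction on $r$, where each inductive step consists of a single application of Theorem~\ref{thm:primitivealgo} to a suitably chosen smaller database.

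For the base case $r=1$, I would apply Theorem~\ref{thm:exactamplitude} to $\mathcal{A}=H^{\otimes n}$ (so $a=1/N$, $Q=0$, $E=n$) with target $a'=1/k$. Using Fact~\ref{fact:ceiltononceil} this gives $w\leq \sqrt{N/(4k)}(1+2/k)\leq \sqrt{N/(4k)}(1+4/k)$ queries, and the explicit gate count $w(4n+2n+8)+n=O(\sqrt{N/k}\log N)$ other gates, which matches the target since $\log^{(1)} N=\log N$ and $(1+6/k)\geq 1$.

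For the inductive step, I would assume the theorem for $r-1$, pick $M=2^m$ with $m=\lceil\log(n^2k^3)\rceil$, invoke the induction hypothesis on the $M$-bit database with parameter $r-1$ to obtain an algorithm $\mathcal{G}$ with $Q\leq \sqrt{M/(4k)}(1+4/k)^{r-1}$ queries and $E=O(\sqrt{M/k}(1+6/k)^{2r-3}\max\{\log k,\log^{(r-1)} M\})$ gates, and then apply Theorem~\ref{thm:primitivealgo} using $\mathcal{G}$ to lift to the $N$-bit database. The choice $M\approx n^2k^3$ is forced by two competing constraints: $M$ must be large enough that the additive $3n\sqrt{N/M}(1+3/k)$ term in Theorem~\ref{thm:primitivealgo}'s gate bound is absorbed into the target (this is why we need roughly $M\gtrsim n^2k^3$, making $3n\sqrt{N/M}=O(\sqrt{N/k}/k)$), and $M$ must be small enough that $\log^{(r-1)} M\leq O(\log^{(r)} N)$, which holds because $\log M=O(\log n)=O(\log^{(2)} N)$ and each further logarithm shifts the index by one.

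The main obstacle is ensuring that constants do not blow up over $r$ levels. The key point is that Theorem~\ref{thm:primitivealgo}'s bounds are explicit, with a multiplicative factor of exactly $(1+4/k)$ for queries and $(1+3/k)$ for gates, so telescoping yields $Q'\leq \sqrt{N/(4k)}(1+4/k)^r$ exactly and $E'\leq O(\sqrt{N/k}/k)+O(\sqrt{N/k}(1+6/k)^{2r-3}(1+3/k)\max\{\log k,\log^{(r)} N\})$. Since $(1+3/k)\leq (1+6/k)^2$, the second term fits inside $O(\sqrt{N/k}(1+6/k)^{2r-1}\max\{\log k,\log^{(r)} N\})$ and the first is absorbed trivially. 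It remains to verify the preconditions of Theorem~\ref{thm:primitivealgo}: $n\geq m+2\log k$ becomes $n\geq 2\log n+O(\log k)$, which holds for sufficiently large $N$, and $Q\geq k+2$ follows from $Q\geq \sqrt{M/(4k)}\geq nk/2$ for $k\geq 4$.
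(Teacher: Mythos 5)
Your overall strategy is the paper's: build the algorithm by repeatedly applying Theorem~\ref{thm:primitivealgo} over a tower of database sizes $n_{i-1}\approx\lceil\log(n_i^2k^3)\rceil$, with plain exact amplitude amplification at the bottom. But packaging this as an induction on the theorem statement itself introduces a genuine gap: the statement you are inducting on carries the hypothesis $k\in\{4,\ldots,\log\log N\}$, and that hypothesis fails for the inner database. With $M=2^m$, $m=\lceil\log(n^2k^3)\rceil$, you have $\log\log M\approx\log(2\log n+3\log k)$, which is roughly $\log\log\log N$ --- far below $\log\log N$ --- so for most admissible $k$ you are not entitled to invoke the induction hypothesis on $M$. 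The paper avoids this by not inducting on the theorem: the condition $k\leq\log\log N$ is used only once, at the top level (to show $n_{r-1}+2\log k\leq n_r$ in Claim~\ref{claim:increasingdatabases}), and at lower levels it is replaced by the explicit floor $n_{i-1}\geq(2i+6)\log k$ built into the definition of the sequence. That floor is the second thing your plan drops: for $r$ close to $\log^\star N$ the recursion $m\mapsto\lceil\log(m^2k^3)\rceil$ bottoms out near $3\log k$, which is too small for the base-case estimates (the paper needs $N_1\geq k^{10}$ to get $Q_1\geq k+2$ and to apply Fact~\ref{fact:ceiltononceil}) and for the bookkeeping that makes $\log N_1=O(\max\{\log k,\log^{(r)}N\})$ (Claim~\ref{claim:logN1calculation} and Fact~\ref{fact:logupperbound} rely on that floor). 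You would need to modify the induction hypothesis to carry a condition like ``$M\geq k^{10}$'' in place of ``$k\leq\log\log M$'' and to take the $\max$ with $(2i+6)\log k$ when choosing $m$.

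Two smaller soft spots. First, Theorem~\ref{thm:primitivealgo} requires $Q\geq k+2$, and you justify this by ``$Q\geq\sqrt{M/(4k)}$,'' but the induction hypothesis only supplies an \emph{upper} bound on $Q$; the lower bound needs a separate argument (the paper proves $Q_1\geq k+2$ directly and observes the $Q_i$ are non-decreasing). Second, ``the first [term $O(\sqrt{N/k}/k)$] is absorbed trivially'' hides the real issue: each of the $r$ levels contributes such an additive term, and for small $k$ and $r$ near $\log^\star N$ their sum is $\Theta((r/k)\sqrt{N/k})$, which is not dominated by $\sqrt{N/k}\max\{\log k,\log^{(r)}N\}$ with a constant independent of $r$. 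The clean fix is the paper's: prove $E_{i}\geq\sqrt{N_i/(4k)}$ (Claim~\ref{claim:Eilowerbound}, using the lower bound on $E'$ in Theorem~\ref{thm:primitivealgo}) and absorb the additive $3n_i$ term \emph{multiplicatively} as $(3n_i+E_{i-1})\leq E_{i-1}(1+6/k)$, which is exactly why the exponent in the gate bound is $2r-1$ rather than $r$. Your telescoping of the query count, by contrast, is fine and matches the paper.
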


\begin{proof}
We begin by defining a sequence of integers $n_1,\ldots,n_r$:
$$
n_r=\log N\mbox{ and }n_{i-1}=\max \{(2i+6)\log k,\ceil{\log (n^2_ik^3)}\},\mbox{ for }i\in\{2, \ldots, r\}.
$$
Note that $n_1\geq 10\log k\geq 20$ since $k\geq 4$. We first prove the following claim about this sequence.

\begin{claim}
\label{claim:increasingdatabases}
If $i\in \{2,\ldots,r\}$, then $n_{i-1}+2\log k\leq n_i$ .
\end{claim}

\begin{proof} 
We use downward induction on $i$. For the base case $i=r$, note that $n_r=\log N$. Since $n_{r-1}=\max \{(2r+6)\log k,\ceil{\log (n^2_rk^3)}\}$, and $(2r+6)\log k\leq \ceil{\log (n^2_rk^3)}$ for sufficiently large $N$ and $k\leq \log\log N$, we may assume $n_{r-1}=\ceil{\log (n^2_rk^3)}$. Hence
$$
n_{r-1}+2\log k=\ceil{\log (n^2_rk^3)}+2\log k\leq \log (2n^2_rk^5) \leq \log N=n_r,
$$
where the last inequality assumed $N$ sufficiently large and used $k\leq\log\log N$.

For the inductive step, assume we have $n_j+2\log k\leq n_{j+1}$. We now prove $n_{j-1}+2\log k\leq n_j$ by considering the two possible values for $n_{j-1}$.

{\bf Case~1.} $n_{j-1}=(2j+6)\log k$. Then we have 
\begin{equation}
\label{eq:increasingdatabases}
n_{j-1}+2\log k \begin{cases} 
 =n_j & \text{if }  n_{j}=(2j+8)\log k \\
 \leq \ceil{\log (n^2_{j+1}k^3)}=n_j & \text{if }  n_{j}=\ceil{\log (n^2_{j+1}k^3)},
\end{cases}
\end{equation}
using $n_{j}=\max \{(2j+8)\log k,\ceil{\log (n^2_{j+1}k^3)}\}$ in the last inequality. 

{\bf Case~2.} $n_{j-1}=\ceil{\log (n^2_jk^3)}$. We first show $n_{j-1}\leq n_j$:
\[n_{j-1}\leq\ceil{\log (n^2_{j+1}k^3)} \begin{cases} 
\leq (2j+8)\log k= n_j & \text{if } n_{j}=(2j+8)\log k \\
 =n_j & \text{if }  n_{j}=\ceil{\log (n^2_{j+1}k^3)},
\end{cases}
\]
where the first inequality uses induction hypothesis, the second uses $n_{j}=\max \{(2j+8)\log k,\ceil{\log (n^2_{j+1}k^3)}\}$. 
We can now conclude the inductive step:
$$
n_{j-1}+2\log k\leq  \log (2n^2_{j} k^5) = (1+2\log n_j)+5\log k\leq  n_j/2+5\log k\leq n_j/2+n_j/2 =n_j.
$$
In the first inequality above we used $n_{j-1}\leq \log(2n^2_jk^3)$, and we used $n_j\geq n_1\geq 10\log k\geq 20$ (using $n_{j-1}\leq n_j$ for $j\in \{2,\ldots ,r\}$ and $k\geq 4$) to conclude $1+2\log n_j\leq n_j/2$ (which is true for $n_j\geq 20$) in the second inequality, and $5\log k\leq n_j/2$ in the last inequality.
\end{proof}

Using the sequence $n_1,\ldots ,n_r$, we consider $r$ database-sizes $2^{n_1}=N_1\leq 2^{n_2}=N_2\leq \cdots \leq 2^{n_r}=N_r=N$. For each $i\in [r]$, we will construct a quantum algorithm $\mathcal{C}^{(i)}$ on a database of size $N_i$ that finds a unique solution with probability exactly $1/k$. $Q_i$ and $E_i$ will be the query complexity and gate complexity, respectively, of algorithm $\mathcal{C}^{(i)}$.  We have already constructed the required algorithm $\mathcal{C}^{(1)}$ (described after Theorem~\ref{thm:exactamplitude}) on an $N_1$-bit database~using
$$
Q_1= \Big\lceil\frac{\sqrt{N_1}(1+1/k)}{2\sqrt{k}}-\frac{1}{2}\Big\rceil \leq \frac{\sqrt{N_1}(1+2/k)}{2\sqrt{k}}
$$ 
queries, where the inequality follows from Fact~\ref{fact:ceiltononceil} (since $N_1\geq k^{10}$). Also, note that 
$$
Q_1\geq \frac{\sqrt{N_1}(1+1/k)}{2\sqrt{k}}-1 \geq k+2,
$$
where the first inequality used $N_1\geq k^{10}$, and the second inequality used $k\geq 4$. Using Theorem~\ref{thm:exactamplitude}, the number of gates $E_1$ used by $\mathcal{C}^{(1)}$ is
\begin{align*}
\Big\lceil\frac{\sqrt{N_1}(1+1/k)}{2\sqrt{k}}-\frac{1}{2}\Big\rceil(6\log N_1+8)+\log N_1 &\leq \frac{\sqrt{N_1}(1+2/k)}{\sqrt{k}}(3\log N_1+4)+\log N_1\\
& \leq \frac{4\sqrt{N_1}(1+2/k)}{\sqrt{k}}\log N_1+\log N_1\\
&\leq  \frac{4\sqrt{N_1}(1+3/k)}{\sqrt{k}}\log N_1,
\end{align*}
where we use Fact~\ref{fact:ceiltononceil} (since $N_1\geq k^{10}$) in the first inequality and $N_1\geq k^{10}$ in the second and third inequality. It is not hard to see that $E_1\geq \sqrt{N_1/(4k)}$. 

For $i\in\{2, \ldots, r\}$, we apply Theorem~\ref{thm:primitivealgo} using $\mathcal{C}^{(i-1)}$ as the base algorithm and we obtain an algorithm $\mathcal{C}^{(i)}$ that succeeds with probability exactly $1/k$. We showed earlier in Claim~\ref{claim:increasingdatabases} that $n_{i-1}+2\log k\leq n_i$ and it also follows that $k+2\leq Q_1\leq \cdots \leq Q_r$ (since the database-sizes $N_1,\ldots ,N_r$ are non-decreasing). Hence both assumptions of Theorem~\ref{thm:primitivealgo} are satisfied. The total number of queries used by $\mathcal{C}^{(i)}$~is
\begin{equation}
\label{eq:recursion1}
Q_i\leq \sqrt{\frac{N_{i}}{N_{i-1}}}Q_{i-1} \Big(1+\frac{4}{k}\Big).
\end{equation}
We need the following claim to analyze the number of gates used by $\mathcal{C}^{(i)}$: 

\begin{claim}
\label{claim:Eilowerbound}
$E_i\geq \sqrt{N_{i}/(4k)}$ for all $i\in [r]$.
\end{claim}

\begin{proof}
The proof is by induction on $i$. For the base case, we observed earlier that $E_1\geq \sqrt{N_1/(4k)}$. For the induction step assume $E_{i-1}\geq \sqrt{N_{i-1}/(4k)}$. The claim follows immediately from the lower bound on $E'$ in Theorem~\ref{thm:primitivealgo} since $E_i\geq E_{i-1}\sqrt{N_i/N_{i-1}}\geq \sqrt{N_i/(4k)}$.
\end{proof}

Recursively it follows that the number of gates $E_i$ used by $\mathcal{C}^{(i)}$ is at most 
\begin{equation}
\label{eq:recursion2}
\begin{aligned}
\sqrt{\frac{N_{i}}{N_{i-1}}(}E_{i-1}+3n_i) (1+3/k)&\leq \sqrt{\frac{N_{i}}{N_{i-1}}}E_{i-1}\Big(1+3n_i\sqrt{\frac{4k}{N_{i-1}}}\Big) (1+3/k)\\
& \leq \sqrt{\frac{N_{i}}{N_{i-1}}}E_{i-1} (1+6/k)^2,
\end{aligned}
\end{equation}
where we used Claim~\ref{claim:Eilowerbound} in the first inequality and $n_i\leq \sqrt{\frac{N_{i-1}}{k^3}}$ in the last inequality (note that this inequality also holds if $n_{i-1}=(2i+6)\log k\geq \ceil{\log (n^2_ik^3)}$). Unfolding the recursion in Equations~(\ref{eq:recursion1}) and~(\ref{eq:recursion2}), we~obtain
$$
Q_r\leq \sqrt{\frac{N_r}{4k}}\Big(1+\frac{4}{k}\Big)^{r}, \qquad E_r\leq 4\sqrt{\frac{N_r}{k}}\Big(1+\frac{6}{k}\Big)^{2r-1}\log N_1.
$$
It remains to show that $n_1$, which is defined to be $\max\{10\log k, \ceil{\log (n^2_2k^3)}\}$, is $O(\max\{\log k,\log^{(r)}N\})$. If $n_1=10\log k$, then we are done. If $n_1=\ceil{\log (n^2_2k^3)}$, we need the following claim to conclude the~proof.

\begin{claim}
\label{claim:logN1calculation}
Suppose $n_{1}=\ceil{\log(n^2_2k^3)}$. Then $n_{i-1}=\ceil{\log(n^2_ik^3)}$ for all $i\in \{2,\ldots,r\}$.
\end{claim}

\begin{proof}
We prove the claim by induction on $i$. The base case $i=2$ is the assumption of the claim. 

For the inductive step, assume $n_{i-1}=\ceil{\log(n^2_ik^3)}$ for some $i\geq 2$. We have 
$$
\log(n_i^2 k^4)\geq \ceil{\log(n_i^2 k^3)}\geq (2i+6)\log k=\log(k^{2i+6})
$$
where the second inequality is because of the definition of $n_{i-1}$. 
Hence, using Fact~\ref{fact:logupperbound} (whose assumptions $k\geq 3$ and $i\geq 2$ hold by the assumption of the theorem and claim respectively):
$$
n_i\geq k^{i+1}>(2i+8)\log k.
$$
Thus $n_i= \max \{(2i+8)\log k,\ceil{\log(n^2_{i+1}k^3)}\}$ must be equal to the second term in the $\max$, which concludes the proof.
\end{proof}

\noindent
Hence if $n_1=\ceil{\log(n^2_2k^3)}$, we can use the claim above to write 
$$
n_{i-1}= \ceil{2\log n_{i}+3\log k}\leq 4\log n_i, \quad \text{for } i\in \{2,\ldots ,r\},
$$
where the last inequality follows from $k\leq n^{1/3}_2\leq n^{1/3}_i$ (using $\ceil{\log(n^2_2k^3)}\geq 10\log k$ to conclude $k\leq n^{1/3}_2$ and Claim~\ref{claim:increasingdatabases}). Since $n_r=\log N$, it follows easily that $n_1=O(\log^{(r)}N)$. 

We conclude $n_1=O(\max\{\log k,\log^{(r)}N\})$.
\end{proof}

The following is our main result:

\begin{corollary}
\
\begin{itemize}
\item For every constant integer $r>0$ and sufficiently large $N=2^n$, there exist a quantum algorithm that finds a unique solution in a database of size $N$ with probability 1, using $(\frac{\pi}{4}+o(1))\sqrt{N}$ queries and $O(\sqrt{N}\log^{(r)}N)$ gates,
\item For every $\eps>0$ and sufficiently large $N=2^n$, there exist a quantum algorithm that finds a unique solution in a database of size $N$ with probability 1, using $\frac{\pi}{4}\sqrt{N}(1+\eps)$ queries and $O(\sqrt{N} \log (\log^{\star}N))$~gates.
\end{itemize}
\end{corollary}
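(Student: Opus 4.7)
The plan is to combine Theorem~\ref{thm:grover-almost-opt}, which produces an algorithm succeeding with probability exactly~$1/k$, with Corollary~\ref{cor:amplify1/kto1}, which amplifies such an algorithm to success probability~$1$. Both bullets are instances of this scheme; only the choice of parameters $(r,k)$ differs.

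For the first bullet, with $r$ a fixed positive integer, I would choose $k=k(N)$ to be the power of~$2$ closest to $\log^{(r)}N$; for sufficiently large~$N$ this lies in $\{4,\ldots,\log\log N\}$. Theorem~\ref{thm:grover-almost-opt} then yields a probability-$1/k$ algorithm using $Q=\sqrt{N/(4k)}(1+4/k)^{r}$ queries and $E=O(\sqrt{N/k}(1+6/k)^{2r-1}\log^{(r)}N)$ gates, where the $\max\{\log k,\log^{(r)}N\}$ in the theorem collapses to $\log^{(r)}N$ because $\log k=\log^{(r+1)}N\leq\log^{(r)}N$. Feeding this algorithm into Corollary~\ref{cor:amplify1/kto1} produces an exact algorithm using
\[
\tfrac{\pi}{2}\,Q\,\sqrt{k}\,(1+2/\sqrt{k})^2=\tfrac{\pi}{4}\sqrt{N}\,(1+4/k)^{r}(1+2/\sqrt{k})^2=\left(\tfrac{\pi}{4}+o(1)\right)\sqrt{N}
\]
queries, since $r$ is constant and $k\to\infty$ with~$N$, and $O(\sqrt{k}(n+E))=O(\sqrt{N}\log^{(r)}N)$ gates, the $\sqrt{k}\,n$ contribution being negligible.

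For the second bullet, I would take $r=\log^\star N$ and pick $k$ to be a power of~$2$ of order $C\log^\star N/\eps$, for a sufficiently large absolute constant~$C$. This satisfies the hypotheses of Theorem~\ref{thm:grover-almost-opt} for large~$N$, since $\log^\star N$ grows much slower than $\log\log N$. The query overhead past $\frac{\pi}{4}\sqrt{N}$ coming from the two applications of Theorem~\ref{thm:grover-almost-opt} and Corollary~\ref{cor:amplify1/kto1} is
\[
(1+4/k)^{r}(1+2/\sqrt{k})^2\leq \exp\!\left(\tfrac{4\log^\star N}{k}\right)(1+2/\sqrt{k})^2\leq 1+\eps,
\]
provided $C$ is chosen large enough relative to~$\eps$ and $N$ is large enough, yielding the desired $\frac{\pi}{4}\sqrt{N}(1+\eps)$ bound. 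For the gate count, $\log^{(r)}N\leq 1$ by the definition of $\log^\star$, so the maximum in Theorem~\ref{thm:grover-almost-opt} equals $\log k=O(\log(\log^\star N))$ (for fixed~$\eps$), and the final gate count after amplification is $O(\sqrt{N}\log(\log^\star N))$.

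The delicate point is the second bullet: because $r$ grows with~$N$, the overhead factors $(1+4/k)^{r}$ and $(1+6/k)^{2r-1}$ appearing in Theorem~\ref{thm:grover-almost-opt} are no longer automatically $1+o(1)$, and one must balance $k$ against $r=\log^\star N$ so that these blow-ups stay within the target $1+\eps$ while simultaneously keeping $\log k$ at $O(\log(\log^\star N))$. This works precisely because $\log^\star$ grows so slowly that taking $k$ linear in $\log^\star N$ suffices on both fronts at once.
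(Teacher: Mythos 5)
Your proposal is correct and follows the paper's own route precisely---feed the probability-$1/k$ algorithm of Theorem~\ref{thm:grover-almost-opt} into Corollary~\ref{cor:amplify1/kto1} and tune $k$---differing only in the specific choice of $k$ (the paper takes $k\approx(\log^\star N)^2$ in both bullets, whereas you take $k\approx\log^{(r)}N$ for the first and $k\approx\log^\star N/\eps$ for the second), and your choices verify just as well, including the balancing act you correctly identify as the delicate point of the second bullet. The one slip is the edge case $r=1$ of the first bullet: there your prescription gives $k\approx\log N$, violating the hypothesis $k\leq\log\log N$ of Theorem~\ref{thm:grover-almost-opt}; this case is trivial (standard Grover already achieves $O(\sqrt{N}\log N)$ gates), or is repaired by taking $k\approx\log^{(2)}N$ instead, since the gate bound $\max\{\log k,\log^{(r)}N\}=\log N$ is unaffected.
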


\begin{proof}
Applying Corollary~\ref{cor:amplify1/kto1} to algorithm $\mathcal{C}^{(r)}$ (as described in Theorem~\ref{thm:grover-almost-opt}), with some $k\leq\log\log N$ to be specified later, we obtain an algorithm that succeeds with probability~1 using at most 
$$
\frac{\pi}{2}\Big(\sqrt{\frac{N}{4k}}\Big(1+\frac{4}{k}\Big)^{r}\Big)\cdot \Big(\sqrt{k}\Big(1+\frac{2}{\sqrt{k}}\Big)^2\Big) \leq \frac{\pi}{4} \sqrt{N}\Big(1+\frac{4}{\sqrt{k}}\Big)^{r+2}
$$
queries and 
$$
O\Big(\sqrt{k}n+\sqrt{N}\Big(1+\frac{6}{k}\Big)^{2r-1}\max\{\log k,\log^{(r)} N\}\Big) \leq O\Big(\sqrt{N}\Big(1+\frac{6}{k}\Big)^{2r} \max\{\log k,\log^{(r)} N\}\Big) 
$$ 
gates. To obtain the two claims of the corollary we can now either pick:
\begin{itemize}
\item $k=(c_1\log^{\star}N)^2$, where $c_1 \in [1,2]$ ensures $k$ is a power of~2. It follows that $(1+\frac{4}{c_1\log^{\star}N})^{r+2}=1+o(1)$. Since $\log^{\star}N\in o(\log^{(r)}N)$ for every constant $r$, we have $\max\{\log k,\log^{(r)} N\}=\log^{(r)} N$. Hence the query and gate complexities are $(\frac{\pi}{4}+o(1))\sqrt{N}$ and $O(\sqrt{N}\log^{(r)}N)$, respectively.
\item $r=\log^{\star} N$ and $k=(c_2(\log^{\star} N + 2))^2$, where we choose $c_2$ as the smallest number that is at least $4/\ln (1+\eps)$ and that makes $k$ a power of~2. We have $(1+\frac{4}{\sqrt{k}})^{r+2}\leq (1+\frac{4}{c_2(\log^{\star}N +2)})^{\log^{\star}N+2}\leq 1+\eps$. Hence the query and gate complexities are $\frac{\pi}{4}\sqrt{N}(1+\eps)$ and $O(\sqrt{N}\log (\log^{\star}N))$, respectively. 
\end{itemize}
\end{proof}

\section{Future work}
Our work could be improved further in a number of directions: 
\begin{itemize}
\item Can we remove the $\log (\log^{\star}) N$ factor in the gate complexity, reducing this to the optimal $O(\sqrt{N})$?  This may well be possible, but requires a different idea than our roughly $\log^{\star}$ recursion steps, which will inevitably end up with $\omega(\sqrt{N})$ gates.
\item Our construction only works for specific values of~$N$.  Can we generalize it to work for all sufficiently large $N$, even those that are not powers of~2, while still using close to the optimal $\frac{\pi}{4}\sqrt{N}$ queries?
\item Can we obtain a similar gate-optimized construction when the database has \emph{multiple} solutions instead of one unique one? Say when the exact number of solutions is known in advance? 
\item Most applications of Grover deal with databases with an unknown number of solutions, focus only on number of queries.  Are there application where our reduction in the number of elementary gates for search with one unique solution is both applicable and significant? 
\end{itemize}

\paragraph{Acknowledgments.}
We thank Peter H\o yer and Andris Ambainis for helpful comments related to~\cite{aaronson&ambainis:searchj}.

\bibliographystyle{alpha}

\newcommand{\etalchar}[1]{$^{#1}$}

\appendix

\section{Exact amplitude amplification}
\label{app:exactampproof}

For the sake of completeness we present the construction of quantum algorithm $\mathcal{B}$ from Theorem~\ref{thm:exactamplitude}. The idea is to lower the success probability from $a$ in such a way that an integer number of rounds of amplitude amplification suffice to produce a solution with probability exactly~$a'$.

Define $\theta=\frac{\arcsin(\sqrt{a'})}{2w+1}$ and $\tilde{a}=\sin^2(\theta)$, where $w$ is defined in Theorem~\ref{thm:exactamplitude}. Let $R_{\tilde{a}/a}$ be the one-qubit rotation that maps $\ket{0}\mapsto \sqrt{\tilde{a}/a}\ket{0}+\sqrt{1-\tilde{a}/a}\ket{1}$. Call an $(n+1)$-bit string $i,b$ a ``solution'' if $x_i=1$ \emph{and} $b=0$.  
Define the $(n+1)$-qubit unitary $O'_x=(I\otimes XH)O_x(I\otimes HX)$. It is easy to verify that $O'_x$ puts a $-$ in front of the solutions (in the new sense of the word), and a $+$ in front of the non-solutions.

Let $\mathcal{A}'=\mathcal{A}\otimes R_{\tilde{a}/a}$, and define $\ket{U}=\mathcal{A}'\ket{0^{n+1}}$ to be the final state of this new algorithm.
Let $\ket{G}$ be the normalized projection of $\ket{U}$ on the (new) solutions and $\ket{B}$ be the normalized projection of $\ket{U}$ on the (new) non-solutions. Measuring $\ket{U}$ results in a (new) solution with probability exactly~$\sin^2(\theta)$, hence we can write
$$
\ket{U}=\sin(\theta)\ket{G}+\cos(\theta)\ket{B}.
$$ 
Define $\mathcal{Q}=\mathcal{A}' D_{n+1}(\mathcal{A}')^{-1}O'_x$. This is a product of two reflections in the plane spanned by $\ket{G}$ and $\ket{B}$: $O'_x$ is a reflection through $\ket{G}$, and $\mathcal{A}'D_{n+1}(\mathcal{A}')^{-1}=2\ketbra{U}{U}-I$ is a reflection through~$\ket{U}$. As is well known in the analysis of Grover's algorithm and amplitude amplification, the product of these two reflections rotates the state over an angle $2\theta$. Hence after applying $\mathcal{Q}$ $w$ times to $\ket{U}$ we have the state
$$
\mathcal{Q}^w\ket{U}=\sin((2w+1)\theta)\ket{G}+\cos((2w+1)\theta)\ket{B}=\sqrt{a'}\ket{G}+\sqrt{1-a'}\ket{B},
$$
since $(2w+1)\theta=\arcsin(\sqrt{a'})$. Thus the algorithm $\mathcal{A}'$ can be boosted to success probability~$a'$ using an integer number of applications of $\mathcal{Q}$.

Our new algorithm $\mathcal{B}$ is now defined as $\mathcal{Q}^w\mathcal{A}'$. It acts on $n+1$ qubits (all initially~0) and maps 
$$
\ket{0^{n+1}}\mapsto\sqrt{a'}\ket{G}+\sqrt{1-a'}\ket{B},
$$ 
so it finds a solution with probability exactly~$a'$. 
$\mathcal{B}$ uses $w+1$ applications of algorithm $\mathcal{A}$ together with elementary gate~$R_{\tilde{a}/a}$; $w$ applications of $\mathcal{A}^{-1}$ together with~$R_{\tilde{a}/a}^{-1}$; $w$ applications of $O'_x$ (each of which involves one query to~$x$ and two other elementary gates, counting $XH$ as one gate); and $w$ applications of $D_{n+1}$ (each of which takes $4n+3$ elementary gates). Hence the total number of queries that $\mathcal{B}$ makes is at most $(2w+1)Q+w$ and the number of gates used by $\mathcal{B}$ is at most $(2w+1)E+4w(n+2)$.

\end{document}